%% LyX 2.0.4 created this file.  For more info, see http://www.lyx.org/.
%% Do not edit unless you really know what you are doing.
\documentclass[10pt,twocolumn,english]{article}
\usepackage{helvet}
\usepackage[latin1]{inputenc}
\usepackage[letterpaper]{geometry}
\geometry{verbose,tmargin=1in,lmargin=0.5in}
\usepackage{fancyhdr}
\pagestyle{fancy}
\usepackage{babel}
\usepackage{amsmath}
\usepackage{amssymb}
\usepackage[unicode=true,pdfusetitle,
 bookmarks=true,bookmarksnumbered=false,bookmarksopen=false,
 breaklinks=false,pdfborder={0 0 1},backref=section,colorlinks=false]
 {hyperref}

\makeatletter
%%%%%%%%%%%%%%%%%%%%%%%%%%%%%% Textclass specific LaTeX commands.
\usepackage{geometry}
\geometry{verbose,letterpaper,tmargin=1in,
          lmargin=0.5in,textwidth=7.5in, textheight=9in}

\usepackage{times,helvet}
\usepackage{graphics,color}
\usepackage{altsty}

%

%%%%%%%%%%%%%%%%%%%%%%%%%%%%%% User specified LaTeX commands.
\usepackage{babel}
\usepackage{algorithmic}
\usepackage{algorithm}
\newenvironment{proof}{\paragraph{\textit{Proof.}}}{\hfill\null}
\newtheorem{definition}{Definition}[section]
\newtheorem{proposition}{Proposition}[section]
\newtheorem{example}{Example}[section]
\newtheorem{theorem}{Theorem}
\newtheorem{cor}{Corollary}

%\makeatother

\begin{document}

\title{Shaping Operations to Attack Robust Terror Networks}

\author{Devon Callahan, Paulo Shakarian\\
Network Science Center and\\
Dept. of Electrical Engineering and Computer Science\\
United States Military Academy\\
West Point, NY 10996\\
Email: devon.callahan[at]usma.edu,\\
 paulo[at]shakarian.net
\and
Jeffrey Nielsen, Anthony N. Johnson\\
Network Science Center and\\
Dept. of Mathematical Science\\
United States Military Academy\\
West Point, NY 10996\\
Email: jeffrey.nielsen[at]usma.edu,\\
 anthony.johnson[at]usma.edu}

\maketitle
\begin{abstract}
\noindent Security organizations often attempt to disrupt terror or insurgent networks by targeting ``high value targets'' (HVT's).  However, there have been numerous examples that illustrate how such networks are able to quickly re-generate leadership after such an operation.  Here, we introduce the notion of a \textit{shaping} operation in which the terrorist network is first targeted for the purpose of reducing its leadership re-generation ability before targeting HVT's.  We look to conduct shaping by maximizing the network-wide degree centrality through node removal.  We formally define this problem and prove solving it is NP-Complete.  We introduce a mixed integer-linear program that solves this problem exactly as well as a greedy heuristic for more practical use.  We implement the greedy heuristic and found in examining five real-world terrorist networks that removing only $12\%$ of nodes can increase the network-wide centrality between $17\%$ and $45\%$.  We also show our algorithm can scale to large social networks of $1,133$ nodes and $5,541$ edges on commodity hardware.
\end{abstract}

\section{Introduction}
Terrorist and insurgent networks are known for their ability to regenerate leadership after targeted attacks.  For example, the infamous Al Qaeda in Iraq terrorist leader Abu Musab al-Zarqawi was killed on June 8th, 2006~\footnote{http://www.nytimes.com/2006/06/08/world/middleeast/08cnd-iraq.html?\_r=1} only to be replaced with Abu Ayyub al-Masri about a week later.~\footnote{http://articles.cnn.com/2006-06-15/world/iraq.main\_1\_al-zarqawi-al-qaeda-leader-zawahiri?\_s=PM:WORLD}  Here, we introduce the notion of a \textit{shaping} operation in which the terrorist network is first targeted for the purpose of reducing its leadership re-generation ability.  Such shaping operations would then be followed by normal attacks against high value targets -- however the network would be less likely to recover due to the initial shaping operations.  In this paper, we look to shape such networks by increasing network-wide centrality, first introduced in \cite{freeman1979centrality}.  Intuitively, this measure provides insight into the criticality of high-degree nodes.  Hence, a network with a low network-wide centrality is a more decentralized organization and likely to regenerate leadership.  In the shaping operations introduced in this paper, we seek to target nodes that will maximize this measure - making follow-on attacks against leadership more effective.  Previous work has primarily dealt with the problem of leadership regeneration by focusing on individuals likely to emerge as new leaders~\cite{carley04}.  However, targeting or obtaining information about certain individuals may not always be possible.  Hence, in this paper, we target nodes that affect the reduce the \textit{network's} ability regenerate leadership as a whole.

The main contributions of this paper is the introduction of a formal problem we call \textit{FRAGILITY} (Section~\ref{prelim-sec}) which seeks to find a set of nodes whose removal would maximize the network-wide centrality.  We also included in the problem a ``no strike list'' - nodes in the network that cannot be targeted for various reasons.  This is because real-world targeting of terrorist or insurgent networks often includes restrictions against certain individuals.  We also prove that this problem is NP-complete (and the associated optimization problem is NP-hard) which means that an efficient algorithm to solve it optimally is currently unknown.  We then provide two algorithms for solving this problem (Section~\ref{algSec}).  Our first algorithm is an integer program that ensures an exact solution and, though intractable by our complexity result, may be amenable to an integer program solver.  Then we introduce a greedy heuristic that we show experimentally (in Section~\ref{expSec}) to provide good results in practice (as we demonstrate on six different real-world terrorist networks) and scales to networks of $1,133$ nodes and $5,541$ edges.  In examining five real-world terrorist networks, we found that successful targetting operations against only $12\%$ (or less) of nodes can increase the network-wide centrality between $17\%$ and $45\%$.  Additionally, we discuss related work further in Section~\ref{rwSec}.

We would like to note that the targeting of individuals in a terrorist or insurgent network does not necessarily mean to that they should be killed.  In fact, for ``shaping operations'' as the ones described in this paper, the killing of certain individuals in the network may be counter-productive.  This is due to the fact that the capture of individuals who are likely emergent leaders may provide further intelligence on the organization in question.

\section{Technical Preliminaries and Computational Complexity}
\label{prelim-sec}
We assume that an undirected social network is represented by the graph $G=(V,E)$.  Additionally, we assume a ``no strike'' set, $S \subseteq V$.  Intuitively, these are nodes in a terrorist/insurgent network that cannot be targeted.  This set is a key part of our framework, as real-world targeting of terrorist and/or insurgents in a terrorist/insurgent network is often accompanied by real-world constraints.  For example, consider the following:
\begin{itemize}
\item We may know an individual's relationships in the terrorist/insurgent network, but may not have enough information (i.e. where he or she may reside, enough evidence, etc.) to actually target him or her.
\item The potential target may be politically sensitive.
\item The potential target may have fled the country or area of operations but still maintains his or her role in the terrorist/insurgent network through electronic communication.
\item The potential ``target'' may actually be a source of intelligence and/or part of an ongoing counter-intelligence operation (i.e. as described in \cite{slowBurn}).
\end{itemize}

Throughout this paper we will also use the following notation.  The symbols $N_G,M_G$ will denote the sizes of $V,E$ respectively.  For each $i \in V$, we will use $d_i$ to denote the degree of that node (the number of individuals he/she is connected to) and $\eta_i$ to denote the set of neighbors and we extend this notation for subsets of $V$ (for $V' \subseteq V, \eta(V')=\bigcup_{i\in V'}\eta_i$).  We will use the notation $\kappa_i$ to denote all edges in $E$ that are adjacent to node $i$ and the notation $d^{*}_G$ to denote the maximum degree of the network.  Given some subset $V'\subseteq V$, we will use the notation $G(V')$ to denote the subgraph of $G$ induced by $V'$.  We describe an example network in Example~\ref{zeroEx}.

\begin{example}
\label{zeroEx}
Consider network $G_{sam}$ in Figure~\ref{ex1fig}.  Nodes $\textbf{a}$ and $\textbf{b}$ may be leaders of a strategic cell that provides guidance to attack cells (nodes $\textbf{c-f}$ and $\textbf{g-j}$).  Note that no members in the attack cells are linked to each other.  Also note that if node $\textbf{a}$ is the leader, and targeted, he could easily be replaced by $\textbf{b}$.
\begin{figure}
    \begin{center}
        \includegraphics[width=.8\linewidth]{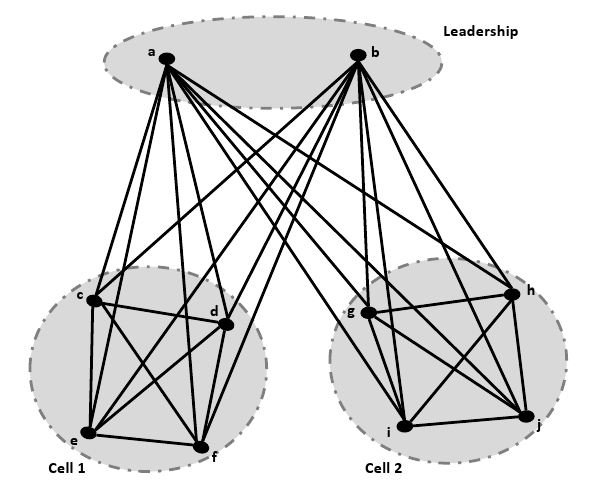}
    \end{center}
\caption{Sample network ($G_{sam}$) for Example~\ref{zeroEx}.}
    \label{ex1fig}
\end{figure}
\end{example}

\subsection{Network-Wide Degree Centrality}

We now introduce the notion of network-wide degree centrality as per \cite{freeman1979centrality}.  The key intuition of this paper is to use this centrality as a measure of the network's ability to re-generate leadership.

\begin{definition}[Network-Wide Degree Centrality~\cite{freeman1979centrality}]
\label{firstDefine}
The degree centrality of a network $G$, denoted $C_G$ is defined as:
\begin{eqnarray}
\label{firstDef}
C_G &=& \frac{\sum_i d^{*}_G-d_i}{(N_G-1)(N_G-2)}
\end{eqnarray}
\end{definition}

We note that there are other types of network-wide centrality (i.e. network-wide betweenness, closeness, etc.).  We leave the consideration of these alternate definitions of network-wide centrality to future work.  Freeman~\cite{freeman1979centrality} shows that for a star network, the quantity $\sum_i d^{*}_G-d_i$ equals $(N_G-1)(N_G-2)$ - and this is the maximum possible value for this quantity.  Hence, the value for $C_G$ can be at most $1$.  As this equation is clearly always positive, network-wide degree centrality is a scalar in $[0,1]$.  Turning back to Example~\ref{zeroEx}, we can compute $C_{G_{sam}}=0.38$ - which seems to indicate that in this particular terrorist/insurgent network that, after leadership is targeted, there is a cadre of second-tier individuals who can eventually take control of the organization.  Throughout this paper, we find it useful to manipulate Equation~\ref{firstDef} as follows.
\begin{eqnarray}
\label{manipEqn}
C_G &=& \frac{N_G d^{*}_G- 2 M_G}{(N_G-1)(N_G-2)}
\end{eqnarray}

We notice that the centrality of a network really depends on three things: number of nodes, number of edges, and the highest degree of any node in the network.  We leverage this re-arranged equation in many of our proofs.  Further, we will use the function $fragile_G : V \rightarrow \Re$ to denote the level of network-wide of the graph after some set of nodes is removed.  Hence, $fragile_G(V')=C_{G(V-V')}$.  We note that this function has some interesting characteristics.  For example, for some subset $V' \subset V$ and element $i \in V-V'$, it is possible that $fragile_G(V') > fragile(V' \cup \{i\})$ or $fragile_G(V') < fragile(V' \cup \{i\})$, hence $fragile_G$ is not necessarily monotonic or anti-monotonic in this sense.  Further, given some additional element $j \in V-V'$, it is possible that $fragile_G(V' \cup \{j\})-fragile_G(V') > fragile_G(V' \cup \{i,j\})-fragile_G(V'\cup\{j\})$ or $fragile_G(V' \cup \{j\})-fragile_G(V') < fragile_G(V' \cup \{i,j\})-fragile_G(V'\cup\{j\})$.  Hence, $fragile_G$ is not necessarily sub- or super- modular either.  Consider Example~\ref{ex1}.

\begin{example}
\label{ex1}
Consider the network $G_{sam}$ in Figure~\ref{ex1fig}.  Here, $fragile_{G_{sam}}(\emptyset)=0.33$, $fragile_{G_{sam}}(\{a\})=fragile_{G_{sam}}(\{b\})=0.57$,  $fragile_{G_{sam}}(\{c\})=0.30$, and $fragile_{G_{sam}}(\{a,b\})=0.0$.  The fact that  $fragile_{G_{sam}}(\{c\}< fragile_{G_{sam}}(\emptyset)$ and $fragile_{G_{sam}}(\{a\} > fragile_{G_{sam}}(\emptyset)$ illustrate that $fragile_{G_{sam}}$ is not necessarily monotonic or anti-monotonic.  Now let us consider the incremental increase of adding an additional element.  Adding $a$ to $\emptyset$ causes $fragile_{G_{sam}}$ to increase by $0.24$ while adding $a$ to $\{b\}\supset \emptyset$ causes $fragile_{G_{sam}}$ to decrease by $0.57$ - implying sub-modularity.  However, adding $c$ to $\emptyset$ causes $fragile_{G_{sam}}$ to decrease by $0.03$ while adding $c$ to set $\{a,b\}\supset \emptyset$ causes $fragile_{G_{sam}}$ to increase by $0.1$ (as $fragile_{G_{sam}}(\{a,b,c\}=0.1$) - implying super-modularity.  Hence, $fragile_{G_{sam}}$ is not necessarily sub- or super- modular.
\end{example}

\subsection{Problems and Complexity Results}

We now have all the pieces to introduce our problems of interest.  We include decision and optimization versions.\\

\noindent\textbf{$FRAGILITY(k,x,G,S)$:}\\
\noindent INPUT: Natural number $k$, real number $x$, network $G=(V,E)$, and no-strike set $S$\\
\noindent OUTPUT: ``Yes'' if there exists set $V' \subseteq V-S$ s.t. $|V'| \leq k$ and $fragile_G(V') > x$ -- ``no'' otherwise.\\

\noindent\textbf{$FRAGILITY\_OPT(k,G,S)$:}\\
\noindent INPUT: Natural number $k$, network $G=(V,E)$, and no-strike set $S$\\
\noindent OUTPUT: Set $V' \subseteq V-S$ s.t. $|V'| \leq k$ s.t. $\not\exists V'' \subseteq V-S$ s.t. $|V''|\leq k$ and $fragile_G(V'') > fragile_G(V')$.\\

As our problems seek to find sets of nodes, rather than individual ones, it raises the question of ``how difficult are these problems.''  We prove that $FRAGILITY$ is NP-Complete - meaning an efficient algorithm to solve it optimally is currently unknown.  Following directly from this result is the NP-hardness of $FRAGILITY\_OPT$.  Below we state and prove this result.

\begin{theorem}[Complexity of $FRAGILITY$]
\label{fNpc}
$FRAGILITY$ is NP-Complete.
\end{theorem}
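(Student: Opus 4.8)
The plan is to prove membership in NP first and then establish NP-hardness by a polynomial reduction from \textbf{Vertex Cover}. Membership in NP is routine: a witness is the set $V'$ itself; we check $|V'|\le k$ and $V'\subseteq V-S$ directly, form the induced subgraph $G(V-V')$, read off its node count, edge count, and maximum degree, evaluate $fragile_G(V')$ via Equation~\ref{manipEqn}, and test whether it exceeds $x$ --- all in polynomial time.

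For hardness, the key fact I would isolate as a lemma is that, for a graph on $N_G\ge 3$ nodes, $C_G=1$ if and only if $G$ is a star. This follows from Equation~\ref{manipEqn}: since the maximum-degree node is incident to exactly $d^{*}_G$ edges we have $M_G\ge d^{*}_G$, hence $N_G d^{*}_G-2M_G\le(N_G-2)d^{*}_G\le(N_G-1)(N_G-2)$, and equality throughout forces $d^{*}_G=N_G-1$ and $M_G=N_G-1$, i.e.\ a star (Freeman already gives the converse). Moreover, whenever $C_G<1$ the numerator of Equation~\ref{manipEqn} is a nonnegative integer strictly smaller than the integer denominator, so $C_G\le 1-\frac{1}{(N_G-1)(N_G-2)}$. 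This integrality gap is what lets a single threshold test simulate an exact ``is the surviving graph a star?'' test.

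Given a \textbf{Vertex Cover} instance $(H,k)$ with $H=(V_H,E_H)$ --- where we may assume $k\le|V_H|-3$, padding $H$ with isolated vertices otherwise, since isolated vertices never affect a minimum vertex cover --- I would build $G=(V,E)$ by adjoining a fresh universal vertex $u$: $V=V_H\cup\{u\}$ and $E=E_H\cup\{\{u,v\}:v\in V_H\}$. Set the no-strike set $S=\{u\}$ and the threshold $x=1-\frac{1}{(N_G-1)(N_G-2)}$, where $N_G=|V_H|+1$; the construction is clearly polynomial. Because $u$ cannot be removed and is adjacent to every surviving vertex, for any admissible $V'$ the graph $G(V-V')$ has at least three surviving leaves (from $|V_H|-k\ge 3$), so it is a star exactly when it is a star centred at $u$, which happens exactly when $V_H\setminus V'$ is independent in $H$, i.e.\ exactly when $V'\cap V_H$ is a vertex cover of $H$. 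Hence if $H$ has a vertex cover of size $\le k$, removing it gives $fragile_G(V')=C_{G(V-V')}=1>x$; conversely, if $H$ has no vertex cover of size $\le k$, then every admissible $V'$ leaves a non-star subgraph on some $N'\le N_G$ vertices, so the integrality gap gives $fragile_G(V')\le 1-\frac{1}{(N'-1)(N'-2)}\le x$. Thus $FRAGILITY(k,x,G,S)$ answers ``yes'' iff $H$ has a vertex cover of size at most $k$.

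The step I expect to be the main obstacle is pinning down the threshold $x$ and arguing that nothing other than ``the surviving graph is a star'' can push $fragile_G$ above it: this requires both the structural characterisation of when $C_G=1$ and the discrete-gap bound on the largest achievable value of $C_G$ below $1$, and it forces the bookkeeping (the universal unremovable vertex, the assumption $k\le|V_H|-3$, and the padding) that simultaneously rules out alternative star centres and the degenerate cases $N'<3$ where Equation~\ref{manipEqn} is undefined. The NP-hardness of $FRAGILITY\_OPT$ follows immediately, since an optimal solution to $FRAGILITY\_OPT(k,G,S)$ together with one evaluation of $fragile_G$ decides $FRAGILITY(k,x,G,S)$ for any $x$.
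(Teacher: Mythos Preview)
Your proof is correct and, like the paper, reduces from \textsc{Vertex Cover}, but the gadget and the threshold analysis are genuinely different. The paper attaches a \emph{disconnected} star on $N_{G^{*}}+2$ fresh no-strike vertices whose sole purpose is to pin $d^{*}$ to a fixed value regardless of which nodes are removed; the threshold $x$ is then obtained by plugging the exact edge count that survives a vertex cover into Equation~\ref{manipEqn}, and the two directions are argued by comparing edge counts. You instead add a single \emph{universal} no-strike vertex and isolate the structural lemma ``$C_G=1$ iff $G$ is a star'' together with the integrality gap $C_G\le 1-\frac{1}{(N_G-1)(N_G-2)}$ whenever $C_G<1$; the threshold $x$ is then chosen just below $1$, so that ``$fragile_G(V')>x$'' becomes literally ``the surviving graph is a star centred at $u$,'' which is exactly the vertex-cover condition. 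Your route is more economical (one extra vertex rather than $N_{G^{*}}+2$) and the integrality-gap observation makes the threshold choice transparent and robust to the fact that $N'$ varies with $|V'|$; the paper's construction, by contrast, avoids the need for the star characterisation and the padding assumption $k\le |V_H|-3$, at the cost of a bulkier instance and a more arithmetic verification. One small wording point: your phrase ``at least three surviving leaves'' is not quite what you mean---the surviving vertices of $V_H$ need not be leaves a priori---but the intended argument (that $u$ has degree $\ge 3$ in the residual graph, hence any star must be centred at $u$) is sound.
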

\begin{proof}
Membership in NP is trivial, consider a set $V'$ of size $k$, -- clearly we can calculate $fragile_G(V')$ in polynomial time.\\
Next we consider the vertex-cover ($VC$) problem and show that it can be embedded into an instance of $FRAGILITY$.  In the $VC$ problem, the input consists of undirected graph $G^{*}=(V^{*},E^{*})$ and natural number $k$.  The output is ``yes'' iff there is a set $V^{**}\subseteq V^{*}$ of size at most $k$ s.t. for all $(i,j) \in E^{*}$, either $i$ or $j$ (or both) are in $V^{*}$.  This problem is well-known to be NP-hard.  First we create a new network $G=(V,E)$ which consists of graph $G^{*}$ but with $N_{G^{*}}+2$ additional nodes which form a star that is disconnected from the rest of the network.  All of the new nodes are put in the no-strike set $S$ (part of the input of $FRAGILITY$).  Clearly, the center of this star is always the most central node in the graph, no matter what is removed from set $V-S$.  This allows us to treat $d^{*}_G$ as a constant equal to $N_G+1$.  Also note that with this construction, for both problems, if a solution exists of less than size $k$, there also exists a solution of exactly size $k$.  Further, we note that for any subset of $V$ whose removal does not affect the overall maximal degree of the network (which is any node outside the set $S$ - hence in some corresponding subset of $V^{*}$ in the graph of the dominating set problem), when some set $V'$ (of size $k$) is removed from $V$, the network-wide degree centrality for the resulting graph can be expressed as follows: $fragile_G(V')=\frac{(N_G-k)(N_G+1)-2(M_G-|\bigcup_{i\in V'}\kappa_i|)}{(N_G-k-1)(N_G-k-2)}$.\\
The proof of correctness of the embedding rests on proving that a ``yes'' answer is returned for the vertex cover problem iff\\ $FRAGILITY(k,\frac{(N_G-k)(N_G+1)-2N_{G^{*}}-2}{(N_G-k-1)(N_G-k-2)},G,S)=\textit{``yes''}$.\\
First, suppose by way of contradiction (BWOC) there is a ``yes'' answer to the $VC$ problem and a ``no'' answer to the corresponding $FRAGILITY$ problem.  Let $V^{**}$ be the set of nodes that cause a ``yes'' answer to $VC$.  If we remove the corresponding nodes from $G$, there are $N_{G^{*}}+1$ edges left in that network.  Hence, as this is a set of size $k$ (thus, meeting the cardinality requirement of $FRAGILITY$ then $fragile_G(V^{**})=\frac{(N_G-k)(N_G+1)-2N_{G^{*}}-4}{(N_G-k-1)(N_G-k-2)}$ which would cause a ``yes'' answer for $FRAGILITY$ -- hence a contradiction.\\
Going the other direction, suppose BWOC there is a ``yes'' answer to the $FRAGILITY$ problem and a ``no'' answer to the corresponding $VC$ problem.  Let $V'$ be the nodes in the solution to $FRAGILITY$.  Clearly, this set is of size $k$ and by how we set up the no-strike list ($S$), there are corresponding nodes in $G^{**}$.  As these nodes cause a ``yes'' answer to $FRAGILITY$, they result in the removal of $M_{G^{*}}$ number of edges in $G$.  By the construction, none of these edges are adjacent to nodes in $S$.  Hence, there are corresponding edges in $G^{*}$.  As this is also the number of edges in $G^{*}$, then this set is also a vertex cover - hence a contradiction.  Hence, as we have shown membership in NP and that this problem is at least as hard as the dominating set problem (resulting in NP hardness), the statement of the theorem follows.
\end{proof}

\begin{cor}[Hardness of $FRAGILITY\_OPT$]
$FRAGILITY\_OPT$ is NP-hard
\end{cor}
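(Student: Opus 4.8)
The plan is to give a polynomial-time Turing (Cook) reduction from the decision problem $FRAGILITY$ to the optimization problem $FRAGILITY\_OPT$, and then invoke Theorem~\ref{fNpc}. Concretely, suppose we had a polynomial-time algorithm (or oracle) that, on input $(k,G,S)$, returns a set $V' \subseteq V-S$ with $|V'|\le k$ that maximizes $fragile_G$ over all such sets.

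Given an arbitrary instance $(k,x,G,S)$ of $FRAGILITY$, I would first call this algorithm on $(k,G,S)$ to obtain an optimal set $V'$. Then I would compute $fragile_G(V')=C_{G(V-V')}$ --- which, as already noted in the membership argument of Theorem~\ref{fNpc}, takes only polynomial time (e.g.\ via Equation~\ref{manipEqn}) --- and answer ``yes'' to $FRAGILITY(k,x,G,S)$ precisely when $fragile_G(V')>x$, and ``no'' otherwise.

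Correctness is immediate: since $V'$ attains $\max\{fragile_G(V'') : V''\subseteq V-S,\ |V''|\le k\}$, there exists a feasible set $V''$ with $fragile_G(V'')>x$ if and only if $fragile_G(V')>x$. Thus this procedure decides $FRAGILITY$ correctly using one call to the $FRAGILITY\_OPT$ solver plus polynomial-time overhead, so a polynomial-time algorithm for $FRAGILITY\_OPT$ would yield one for $FRAGILITY$. Since $FRAGILITY$ is NP-complete by Theorem~\ref{fNpc}, it is NP-hard, and therefore $FRAGILITY\_OPT$ is NP-hard as well.

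I do not anticipate a genuine obstacle here, since the argument is essentially a bookkeeping reduction; the only point worth flagging explicitly is that this is a Turing reduction rather than a many-one reduction, which is the standard and accepted notion of NP-hardness for optimization problems whose output is a witness set rather than a yes/no answer.
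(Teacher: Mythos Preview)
Your argument is correct and is precisely the standard unpacking of the paper's one-line proof ``Follows directly from Theorem~\ref{fNpc}'': use an optimal $FRAGILITY\_OPT$ solver, evaluate $fragile_G$ on its output, and compare to $x$ to decide $FRAGILITY$. The only difference is that you spell out the Turing reduction explicitly, whereas the paper leaves it implicit.
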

\begin{proof}
Follows directly from Theorem~\ref{fNpc}.
\end{proof}

\section{Algorithms}
\label{algSec}
Now with the problems and their complexity identified, we proceed to develop algorithms to solve them.  First, we develop an integer program that, if solved exactly, will produce an optimal solution.  We note that solving a general integer program is also NP-hard.  Hence, an exact solution will likely take exponential time.  However, good approximation techniques such as branch-and-bound exist and mature tools such as QSopt and CPLEX can readily take and approximate solutions to integer programs.  We follow our integer program formulation with a greedy heuristic.  Though we cannot guarantee that the greedy heuristic provides an optimal solution, it often provides a natural approach to approximating many NP-hard optimization problems.

\subsection{Integer Program}

Our first algorithm is presented in the form of an integer program.  The idea is that certain variables in the integer program correspond with the nodes in the original network that can be set to either $0$ or $1$.  An objective function, which mirrors the $fragile$ function is then maximized.  When this function is maximized, all nodes associated with a $1$ variable are picked as the solution.

\begin{definition}[$FRAGILITY\_IP$]
For each $i \in V$, create variables $X_i,Z_i$.  For each undirected edge $ij \in E$, create three variables: $Y_{ij}, Q_{ij}, Q_{ji}$.  Note that the edge is considered in only ``one direction'' for the $Y$ variables and both directions for the $Q$ variables.  We define the $FRAGILITY\_IP$ integer program as follows:

\begin{eqnarray*}
\max& \frac{(N_G - \sum_i X_i)\sum_{ij}Q_{ij}-2\sum_{ij}Y_{ij}}{(N_G -1- \sum_i X_i)(N_G -2- \sum_i X_i)}
\end{eqnarray*}
Subject to:
\begin{eqnarray}
\label{kConst}
& \sum_i X_i \leq k\\
\label{justOneMax}
& \sum_i Z_i = 1\\
\label{adjEdge1}
\forall ij \in E& Y_{ij} \leq 1-X_i\\
\label{adjEdge2}
\forall ij \in E& Y_{ij} \leq 1-X_j\\
\label{adjEdge3}
\forall ij \in E& Q_{ij} \leq Y_{ij}\\
\label{adjEdge4}
\forall ij \in E& Q_{ij} \leq Y_{ji}\\
\forall ij \in E& Q_{ij} \leq Z_i\\
\label{zeroOneMax}
\forall i \in V& Z_i \in \{0,1\}\\
\label{noSconst}
\forall i \in S & X_i = 0\\
\label{zeroOneConst}
\forall i \in V-S& X_i \in \{0,1\}
\end{eqnarray}
\end{definition}

Next we prove how many variables and constraints $FRAGILITY\_IP$ requires as well as prove that it provides a correct solution to $FRAGILITY\_OPT$.

\begin{proposition}
$FRAGILITY\_IP$ has $2 N_G + 3 M_G$ variables and $2+2 N_G + 5 M_G$ constraints.
\end{proposition}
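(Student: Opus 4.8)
The plan is to treat this as a bookkeeping exercise: tally the variables and the constraints family by family, exactly in the order they are introduced in the definition of $FRAGILITY\_IP$, and then sum. No clever idea is needed, so the write-up is essentially a careful accounting.

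First I would count the variables. The definition creates two variables $X_i, Z_i$ for each of the $N_G$ nodes, contributing $2 N_G$; and three variables $Y_{ij}, Q_{ij}, Q_{ji}$ for each of the $M_G$ edges, contributing $3 M_G$. Summing gives $2 N_G + 3 M_G$. The one point worth stating explicitly is that the $Y$ variables are indexed by a single fixed orientation of each edge, so they contribute $M_G$ (not $2 M_G$), while the $Q$ variables appear in both orientations and contribute $2 M_G$; together $Y$ and $Q$ account for the $3 M_G$ edge-variables.

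Next I would count the constraints. Constraints (\ref{kConst}) and (\ref{justOneMax}) are single inequalities/equalities, contributing $2$. The five families indexed by $ij \in E$ --- namely (\ref{adjEdge1}), (\ref{adjEdge2}), (\ref{adjEdge3}), (\ref{adjEdge4}), and the unnumbered $Q_{ij} \leq Z_i$ --- each contribute $M_G$ constraints, for a total of $5 M_G$. The integrality family (\ref{zeroOneMax}) contributes $N_G$ constraints, one per node. Finally, (\ref{noSconst}) contributes $|S|$ constraints and (\ref{zeroOneConst}) contributes $N_G - |S|$; since $S$ and $V - S$ partition $V$, these combine to exactly $N_G$ irrespective of $|S|$. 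Adding the pieces gives $2 + 5 M_G + N_G + N_G = 2 + 2 N_G + 5 M_G$, as claimed.

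Since the argument is purely combinatorial counting, there is no genuine obstacle; the only places one could slip are double-counting the edge-indexed objects (forgetting that $Y$ is one-per-edge), or overlooking that the domain restrictions --- the integrality conditions and the forced zeros on $S$ --- are themselves counted as constraints. Organising the $X$-related constraints around the partition $V = S \cup (V - S)$ with $S \cap (V-S) = \emptyset$ makes the final total fall out without any dependence on $|S|$.
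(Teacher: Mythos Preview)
Your counting is correct and matches the claimed totals. The paper actually states this proposition without proof, so your explicit bookkeeping is precisely the argument that was left implicit; there is nothing to compare.
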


\begin{proposition}
\textit{(1.)} Given the vector $X$ returned by $FRAGILITY\_IP$, the set $\bigcup_{X_i = 1}i$ is a solution to $FRAGILITY\_OPT$.\\
\textit{(2.)}Given a solution $V'$ to $FRAGILITY\_OPT$, $\forall i \in S, X=1$ and  $\forall i \notin S, X=0$ will maximize $FRAGILITY\_IP$.  
\end{proposition}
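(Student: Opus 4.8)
The plan is to prove parts~(1) and~(2) simultaneously by building a value-preserving correspondence between the feasible solutions of $FRAGILITY\_IP$ and the admissible node-removal sets for $FRAGILITY\_OPT$. Concretely, I would establish three facts. First, every feasible assignment of the $X$ variables encodes the set $V' = \{i \in V : X_i = 1\}$, and this $V'$ is itself admissible for $FRAGILITY\_OPT$: constraint~(\ref{zeroOneConst}) makes $V'$ well defined, constraint~(\ref{noSconst}) forces $V' \cap S = \emptyset$, and constraint~(\ref{kConst}) is exactly $|V'| \le k$. Second, conversely, any admissible $V'$ is encoded by the indicator assignment $X_i = 1$ for $i \in V'$ and $X_i = 0$ otherwise (this is how I read part~(2)'s assignment, which must be consistent with~(\ref{noSconst})). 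Third --- and this is the crux --- for any fixed feasible $X$ with corresponding set $V'$, the largest value the remaining variables $Z, Y, Q$ can push the objective to is exactly $fragile_G(V') = C_{G(V-V')}$. Given these three facts, maximizing the $FRAGILITY\_IP$ objective coincides with maximizing $fragile_G$ over admissible sets, so the $X$ returned by the IP decodes to an optimal $V'$ (part~1), and the indicator of any $FRAGILITY\_OPT$-optimal $V'$, together with the corresponding optimal $Z, Y, Q$, attains the IP maximum (part~2).

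The first two facts are routine bookkeeping once the $X \leftrightarrow V'$ dictionary above is written down; the reverse direction additionally fills in $Z, Y, Q$ at the values produced in the third fact. For the third fact I would argue as follows. Fix a feasible $X$; then $\sum_i X_i = |V'|$ is constant, so the denominator $(N_G - 1 - |V'|)(N_G - 2 - |V'|)$ is a fixed positive number once we dispatch the degenerate cases in which the residual graph has fewer than three vertices (these are outside the domain of $C$ in Definition~\ref{firstDefine} and are implicitly excluded by the problem statements, just as the NP-completeness reduction sidestepped such issues with its star gadget). Maximizing the objective thus reduces to maximizing the numerator $(N_G - |V'|)\sum_{ij} Q_{ij} - 2 \sum_{ij} Y_{ij}$. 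Constraints~(\ref{justOneMax}) and~(\ref{zeroOneMax}) single out a unique index $m$ with $Z_m = 1$; the constraint $Q_{ij} \le Z_i$ then zeroes every $Q_{ij}$ with $i \neq m$; and constraints~(\ref{adjEdge1})--(\ref{adjEdge4}) bound each remaining $Q_{mj}$ by $1$ precisely when both $m$ and $j$ survive removal (so the edge $mj$ is present in $G(V-V')$) and by $0$ otherwise. A sign analysis of the numerator --- using $N_G - |V'| \ge 3$, so that the coefficient rewarding a unit of $Q$ exceeds the penalty on the $Y$ that must be raised to support it --- shows the maximizer sets the relevant $Y$ and $Q$ variables to $1$ and chooses $m$ to be a maximum-degree vertex of $G(V-V')$. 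The claim to be verified is then that with these settings $\sum_{ij} Q_{ij} = d^{*}_{G(V-V')}$ and $\sum_{ij} Y_{ij} = M_{G(V-V')}$, so the numerator becomes $(N_G - |V'|)\, d^{*}_{G(V-V')} - 2 M_{G(V-V')}$ and the objective collapses onto the rearranged centrality formula~(\ref{manipEqn}) evaluated on $G(V-V')$, i.e.\ onto $fragile_G(V')$.

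The hard part will be exactly that verification: showing that, at the optimum, the slaved $Y$ and $Q$ variables have sums that \emph{exactly} equal the residual edge count $M_{G(V-V')}$ and the residual maximum degree $d^{*}_{G(V-V')}$, with no over- or under-counting, and that this identification holds uniformly over every feasible $X$. Care is needed because the objective is a ratio whose numerator combines a positively weighted term ($\sum Q$, scaled by $N_G - |V'|$) and a negatively weighted term ($\sum Y$, scaled by $2$), so one must pin down the direction of monotonicity in each auxiliary variable before concluding how the maximizer behaves, and one must separately treat the boundary cases where $(N_G - |V'| - 1)(N_G - |V'| - 2)$ fails to be positive. Once the third fact is established, parts~(1) and~(2) follow by combining the $X \leftrightarrow V'$ correspondence of the first paragraph with the optimality definitions of $FRAGILITY\_IP$ and $FRAGILITY\_OPT$: the decoded set is feasible and no admissible set can beat it, and the indicator of an optimal admissible set is feasible and no feasible IP point can beat it.
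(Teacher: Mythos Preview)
Your plan is structurally the same as the paper's: set up the $X \leftrightarrow V'$ dictionary, check feasibility in both directions, and argue that once the auxiliary variables $Z,Y,Q$ are optimized the IP objective collapses to $fragile_G(V')$. You are in fact considerably more careful than the paper, which simply asserts that constraints~(\ref{adjEdge1})--(\ref{adjEdge2}) ``set variables associated with edges adjacent to nodes not in $V'$ to $1$'' and moves on.

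However, the step you correctly single out as the hard part --- verifying $\sum_{ij} Y_{ij} = M_{G(V-V')}$ at the inner optimum --- will not go through, and the paper's proof glosses over exactly this point. Constraints~(\ref{adjEdge1}) and~(\ref{adjEdge2}) are only \emph{upper} bounds on $Y_{ij}$; nothing in the IP bounds $Y_{ij}$ from below. Take a surviving edge $ij$ with neither endpoint equal to the selected vertex $m$ (the unique index with $Z_m=1$). Both $Q_{ij}$ and $Q_{ji}$ are already forced to $0$ by $Q\le Z$, so raising $Y_{ij}$ yields no gain in the $\sum Q$ term and costs $2$ in the numerator. The maximizer therefore sets such $Y_{ij}=0$, and at the inner optimum $\sum_{ij} Y_{ij}$ equals only the residual degree of $m$, not $M_{G(V-V')}$. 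The objective then strictly overshoots $fragile_G(V')$ whenever the residual graph has any edge not incident to its max-degree vertex, and the value-preserving correspondence you are aiming for fails. This is a defect of the IP as written (it is missing a coupling constraint such as $Y_{ij}\ge 1-X_i-X_j$), not of your reasoning; your sign-analysis instinct is exactly what exposes it. So your verification plan would \emph{discover} the gap rather than close it, and neither your argument nor the paper's can establish the proposition from the stated constraints alone.
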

\begin{proof}
\textit{(1.)} Suppose, BWOC, $\bigcup_{X_i = 1}i$ is not an optimal solution to $FRAGILITY\_OPT$.  Then there is some $V' \neq \bigcup_{X_i = 1}i$ that is.  Suppose $\forall i \in S, X=1$ and  $\forall i \notin S, X=0$.  Clearly, by the definition of a solution to $FRAGILITY\_OPT$, constraints~\ref{kConst},\ref{noSconst} and \ref{zeroOneConst} are all met.  Constraints \ref{adjEdge1} and \ref{adjEdge2} set variables associated with edges adjacent to nodes not in $V'$ to $1$.  Hence, the quantity $\sum_{ij}Y_{ij}$ is equal to the number of edges in the network.  The $Y$ edge variables (both of them for each edge) are also set in a similar manner.  Constraints~\ref{justOneMax},\ref{zeroOneMax} ensures that only one set of such edge variables are set to $1$.  Hence, the quantity $\sum_i X_i)\sum_{ij}Q_{ij}$ is the degree of one node in the network.  As this quantity is present in the objective function and non-negative, it corresponds to the $d^{*}_G$.  As we note that $\sum_i X_i$ is equal to the number of nodes in $G$ when $V'$ is removed, we see that this function is $fragile_G$.  As this quantity is maximized, we have a contradiction.\\
\textit{(2.)} Suppose, BWOC, $\forall i \in S, X=1$ and  $\forall i \notin S, X=0$ is not an optimal solution to $FRAGILITY\_IP$.  Using the same line of reasoning as above, we see that the objective function of $FRAGILITY\_IP$ is the same as $fragile_G$, which also gives us a contradiction.
\end{proof}

Note that this integer program does not have a linear objective function.  However, this can be accommodated for by instead solving $k$ different integer programs and taking the solution from whichever one returns the greatest value for the objective function (that is greater than the initial network-wide degree centrality, of course).  In this case, each integer program is identified with a natural number $i \in \{1,\ldots, k\}$ and the $i$th integer program has the following objective function:
\begin{eqnarray}
\max& \frac{(N_G - i)\sum_{ij}Q_{ij}-2\sum_{ij}Y_{ij}}{(N_G -1- i)(N_G -2- i)}
\end{eqnarray}
As well as constraint~\ref{kConst} as follows:
\begin{eqnarray}
& \sum_i X_i \leq i
\end{eqnarray}

Notice that now the quantities $(N_G - i)$ and $(N_G -1- i)(N_G -2- i)$ can be treated as constants, making the objective function linear.  However, for networks with a heterogeneous degree distribution where $N_G >> k$, it is likely that only the integer program for the case where $i=k$ is needed as removing any node with edges that is unconnected to a maximal degree node will result in an increase in network-wide degree centrality.

Again, we stress that $FRAGILITY\_IP$ provides an exact solution.  As integer-programming is also NP-hard, solving these constraints is likely intractable unless $P=NP$.  However, techniques such as branch-and-bound and mature solvers such as QSopt and CPLEX can provide good approximate solutions to such constraints.  Even if the integer program must be linear, we can use the techniques described above to solve $k$ smaller integer programs or obtaining an approximation by treating the terms involving the total number of nodes in the resulting graph (in the objective function) as constants.  Additionally, a relaxation of the above constraints where $Z_i$ and $X_i$ variables lie in the interval $[0,1]$ is solvable in polynomial time and would provide a lower-bound on the solution to the problem (although this would likely be a loose bound in many cases).

\subsection{A Greedy Heuristic}

The integer program introduced in the last section can be leveraged by an integer-program solver for an approximate solution to $FRAGILITY\_OPT$.  However, it likely will not scale well to extremely large networks.  Therefore, we introduce a greedy heuristic to find an approximate solution.  The ideas is to iteratively pick the node in the network that provides the greatest increase in $fragile$ - and does not cause a decrease.  

\algsetup{indent=1em}
	\begin{algorithm}[!ht]
		\caption{ \textsf{GREEDY\_FRAGILE}}
		\begin{algorithmic}[1]

		\REQUIRE Network $G=(V,E)$, no-strike set $S \subseteq V$, cardinality constraint $k$
		\ENSURE Subset $ V' $
		\medskip

		\STATE{ $V' = \emptyset$}
		\STATE{ $flag = TRUE$}
		
		\WHILE{ $|V'| \leq k$ and $flag$ }
			\STATE{ $curBest = null$, $curBestScore=0$, $haveValidScore=FALSE$ }
			\FOR{ $i \in V-(V'\cup S)$ }
				\STATE{ $curScore = fragile_G(V'\cup\{i\})-fragile_G(V')$}
				\IF{$curScore \geq curBestScore$}
					\STATE{ $curBest = i$}
					\STATE{ $curBestScore = curScore$}
					\STATE{ $haveValidScore=TRUE$}
				\ENDIF
			\ENDFOR
			\IF{ $haveValidScore = FALSE$}
				\STATE{ $flag=FALSE$}
			\ELSE
				\STATE{ $V' = V' \cup \{curBest \}$}
			\ENDIF
		\ENDWHILE
		\RETURN{ $ V' $}.
	\end{algorithmic}
\end{algorithm}

The following two propositions describe characteristics of the output and run-time of $GREEDY\_FRAGILE$, respectively.

\begin{proposition}
If $GREEDY\_FRAGILE$ returns a non-empty solution ($V'$), then $|V'| \leq k$ and $fragile_G(V') > fragile_G(\emptyset)$.
\end{proposition}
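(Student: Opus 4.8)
\noindent The statement packages two essentially independent claims, and I would handle them separately. The first, $|V'|\le k$, is immediate from the guard of the while loop: its body is entered only while $|V'|\le k$ and appends at most one node to $V'$ per pass, so the returned set has size at most $k$. The content of the proposition is the strict inequality $fragile_G(V')>fragile_G(\emptyset)$, and my plan is a telescoping argument along the sequence of nodes the algorithm actually commits to.

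Since the returned $V'$ is non-empty, the \textit{else} branch $V'=V'\cup\{curBest\}$ was executed on some iterations, say $m\ge 1$ of them, appending nodes $v_1,\dots,v_m$ in that order. Put $V'_0=\emptyset$ and $V'_j=V'_{j-1}\cup\{v_j\}$, so $V'=V'_m$, and let $s_j$ denote the value of \textit{curBestScore} at the moment $v_j$ is appended. By the definition of \textit{curScore} in the inner loop, $v_j$ is chosen to maximise $fragile_G(V'_{j-1}\cup\{i\})-fragile_G(V'_{j-1})$ over the eligible $i$, whence $s_j=fragile_G(V'_j)-fragile_G(V'_{j-1})$ and
\[ fragile_G(V')-fragile_G(\emptyset)=\sum_{j=1}^{m}\bigl(fragile_G(V'_j)-fragile_G(V'_{j-1})\bigr)=\sum_{j=1}^{m}s_j . \]
So it suffices to show every $s_j$ is non-negative and at least one is strictly positive.

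Non-negativity is baked into the code: \textit{curBestScore} is re-initialised to $0$ at the top of each iteration and is subsequently overwritten only by values that are no smaller, so whenever the \textit{else} branch fires we have $s_j\ge 0$; already this gives $fragile_G(V')\ge fragile_G(\emptyset)$. For strictness I would look at the first iteration, where $V'_0=\emptyset$ and the eligible set is $V-S$: a node gets appended on this pass precisely because \textit{haveValidScore} was set, i.e.\ because some node of $V-S$ strictly raises the network-wide centrality above $fragile_G(\emptyset)$; as $v_1$ is a maximiser of that increment, $s_1=fragile_G(\{v_1\})-fragile_G(\emptyset)>0$. Combining with the telescoped identity and the non-negativity of $s_2,\dots,s_m$ gives $fragile_G(V')-fragile_G(\emptyset)\ge s_1>0$.

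The single delicate point --- and the one I expect to be the main obstacle --- is the assertion that the node added on the first iteration \emph{strictly} increases the measure rather than merely matching $fragile_G(\emptyset)$. This turns entirely on how one reads the \textit{haveValidScore} flag together with the test $curScore\ge curBestScore$: a non-empty output must have arisen because the first pass found a \emph{beneficial} move (positive increment), the only alternative being the early exit that sets \textit{flag} to false and returns $\emptyset$. I would make this rigorous by taking ``valid improving move'' to mean a strict increase over the current $fragile_G$ value --- the evident intent behind ``the node \dots that provides the greatest increase \dots and does not cause a decrease'' --- after which a non-empty output forces $s_1>0$ and the proposition follows. Everything else (the cardinality bound, the telescoping identity, and the non-negativity of the increments) is routine bookkeeping.
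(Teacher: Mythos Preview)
Your argument follows the same line as the paper's one-sentence proof, which simply notes that the while-loop guard caps $|V'|$ and that \textit{curBestScore} is reset to $0$ each iteration; your telescoping presentation is just a more explicit rendering of that reasoning. You are in fact more careful than the paper: the subtlety you flag about $\geq$ versus $>$ (a zero-increment node can still set \textit{haveValidScore}) is real and is glossed over in the original, so your proposed reading of the algorithm's intent is the appropriate patch.
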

\begin{proof}
As the algorithm terminates its main loop once the cardinality of the solution reaches $k$ and as in each iteration, the variable $curBestScore$ is initialized as zero, the statement follows.
\end{proof}

\begin{proposition}
$GREEDY\_FRAGILE$ runs in $O(k N_G^2)$ time.
\end{proposition}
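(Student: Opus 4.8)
The plan is to bound the work done by \textsf{GREEDY\_FRAGILE} by counting iterations of its loops and the cost of evaluating $fragile_G$ once. First I would observe that the main \textbf{while} loop runs at most $k$ times, since each successful pass adds exactly one node to $V'$ (line 16) and the loop guard $|V'| \leq k$ halts it once the budget is exhausted; the only other way out is the $flag=FALSE$ branch, which only makes things terminate sooner. So the outer loop contributes a factor of $O(k)$.

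Next I would analyze the inner \textbf{for} loop over $i \in V - (V' \cup S)$. This set has size at most $N_G$, so the loop body executes $O(N_G)$ times per outer iteration. The only nontrivial work in the body is line 6, which requires two evaluations of $fragile_G$: one on $V' \cup \{i\}$ and one on $V'$ (the latter can in fact be computed once per outer iteration, but this does not change the asymptotics). The key sub-claim is that a single evaluation of $fragile_G(V'')$ takes $O(N_G)$ time. This follows from Equation~\ref{manipEqn}: $C_G$ depends only on the node count, edge count, and maximum degree of the induced subgraph $G(V - V'')$. Given $V''$ with $|V''| \leq k$, the new node count is $N_G - |V''|$; the new edge count is $M_G - |\bigcup_{i \in V''}\kappa_i|$, which can be obtained by scanning the (at most $k$) removed nodes and their incident edges; and the new maximum degree can be found by a single pass over the remaining nodes' (updated) degrees. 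All of this is $O(N_G + M_G)$, and if one maintains incremental data structures across iterations it is $O(N_G)$; in either case it is dominated by the $O(k N_G^2)$ bound once we note $M_G = O(N_G^2)$. Multiplying the three factors $O(k) \cdot O(N_G) \cdot O(N_G)$ gives the claimed $O(k N_G^2)$.

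The main obstacle I anticipate is pinning down precisely what ``one evaluation of $fragile_G$ costs,'' since this depends on the data structures assumed for $G$ and on whether degrees and the running maximum are recomputed from scratch or maintained incrementally. The cleanest argument treats $fragile_G(V'')$ as computable in $O(N_G + M_G) = O(N_G^2)$ time from the closed form in Equation~\ref{manipEqn}, which already suffices; a tighter $O(N_G)$ per-call bound needs the incremental bookkeeping sketched above, but is not required to reach the stated asymptotic. I would present the coarse version to keep the proof short, remarking only in passing that the bound is not tight.

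\begin{proof}
The outer \textbf{while} loop iterates at most $k$ times: each iteration that does not set $flag=FALSE$ adds exactly one element to $V'$ on line 16, and the guard $|V'|\leq k$ terminates the loop thereafter. Within each outer iteration, the \textbf{for} loop on line 5 ranges over $V-(V'\cup S)$, a set of at most $N_G$ elements. The body's only non-constant operation is line 6, which evaluates $fragile_G$ on the sets $V'\cup\{i\}$ and $V'$. By Equation~\ref{manipEqn}, $fragile_G(V'') = C_{G(V-V'')}$ is determined by the node count $N_G-|V''|$, the edge count $M_G-|\bigcup_{j\in V''}\kappa_j|$, and the maximum degree of $G(V-V'')$; each of these can be computed by scanning the graph in $O(N_G+M_G)=O(N_G^2)$ time. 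Hence each outer iteration costs $O(N_G \cdot N_G^2)$, which over $k$ iterations is $O(k N_G^3)$; maintaining degrees and the running maximum incrementally across the inner loop reduces each evaluation to $O(N_G)$, yielding the stated $O(k N_G^2)$.
\end{proof}
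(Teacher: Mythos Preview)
Your proposal is correct and follows essentially the same decomposition as the paper: at most $k$ outer iterations, at most $N_G$ inner iterations, and $O(N_G)$ per evaluation of $fragile_G$ once degrees and the maximum are maintained incrementally. The paper's proof is terser---it simply asserts the $O(N_G)$ cost of computing $fragile$ on the grounds that one must ``update the node with the maximum degree''---whereas you spell out the naive $O(N_G+M_G)$ bound first and then tighten it, but the underlying argument is the same.
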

\begin{proof}
We note that $fragile$ is computed in $O(N_G)$ time as it must update the node with the maximum degree.  As the outer loop of the algorithm iterates at most $k$ times and the inner loop iterates $N_G$ times, the statement follows.
\end{proof}

Though our guarantees on $GREEDY\_FRAGILE$ are limited, we show that it performs well experimentally in the next section.

\begin{example}
\label{greedy-ex}
Following from Examples~\ref{zeroEx}-\ref{ex1} using the terrorist/insurgent network $G_{sam}$ from Figure~\ref{ex1fig}, suppose a user wants to identify $3$ nodes that will cause the network to become ``as fragile as possible'' and is able to target any node.  Hence, he would like to solve $FRAGILE\_OPT(3,G_{sam},\emptyset)$ and decides to do so using $GREEDY\_FRAGILE$.  Initially, $fragile_{G_{sam}}(\emptyset)=0.33$.  In the first iteration, it selects and removes node \textbf{a}, increasing the fragility ($fragile_{G_{sam}}(\{a\})=0.57$).  In the next iteration, it selects node \textbf{j}, giving us $fragile_{G_{sam}}(\{a,j\})=0.57$.  Finally, in the third iteration, it picks node \textbf{c}.  This results in $fragile_{G_{sam}}(\{a,j,c\})=0.6$.  The algorithm then terminates.
\end{example}

\section{Implementation and Experiments}
\label{expSec}
All experiments were run on a computer equipped with an Intel Core 2 Duo CPU T9550 processor operating at $2.66$ GHz (only one core was used).  The machine was running Microsoft Windows 7 (32 bit) and equipped with $4.0$ GB of physical memory.  We implemented the $\textsf{GREEDY}\_\textsf{FRAGILE}$ algorithm using Python 2.6 in under $30$ lines of code that leveraged the NetworkX library available from http://networkx.lanl.gov/.

We compared the results of the $\textsf{GREEDY}\_\textsf{FRAGILE}$ to three other more traditional approaches to targeting that rely on centrality measures from the literature.  Specifically, we look at the top closeness and betweenness nodes in the network.  Given node $i$, its closeness is the inverse of the average shortest path length from node $i$ to all other nodes in the graph.  \textit{Betweenness}, on the other hand, is defined as the number of shortest paths between node pairs that pass through $i$.  Formal definitions of both of these measures can be found in \cite{wasserman1994social}.

\subsection{Datasets}

We studied the effects of our algorithm on five different datasets.  The network \textbf{Tanzania}~\cite{moon:2008uq} is a social network of the individuals involved with the Al Qaeda bombing of the U.S. embassy in Dar es Salaam in 1998.  It  was collected from newspaper accounts by subject matter experts in the field.  The remainder networks, \textbf{GenTerrorNw1}-\textbf{GenTerrorNw4} are terrorist networks generated from real-world classified datasets\cite{carley:2009fk,carley:2012kx}.  The \textbf{Tanzania} and the \textbf{GenTerrorNw1}-\textbf{GenTerrorNw4} datasets used in our analysis were
multi-modal networks, meaning they contain multiple node classes such as Agents, Resources, Locations, etc. The presence of the different node classes generate multiple or meta networks, which, in their original state, do not provide the single-mode Agent by Agent network needed to test our algorithms.  Johnson and McCulloh~\cite{johnson:2009vn} demonstrated a mathematical technique to convert meta networks into single-mode networks without losing critical information.  Using this methodology, we were able to derive distant relationships between nodes as a series of basic matrix algebra operations on all five networks. The
result is an agent based social network of potential terrorist. Characteristics of the transformed networks of agent node class only can
be found in Table~\ref{datasetTable}.

\begin{table}
\caption{Network Datasets}
\label{datasetTable}
\begin{center}
\begin{tabular}{|l|l|l|l|l|}
\hline
Name  & Nodes & Edges & Density & Avg. Degree\\
\hline
\hline
Tanzania & $17$ & $29$  & $0.213$ & $3.412$\\
\hline
GenTerrorNw1 & $57$ & $162$ & $0.102$ & $5.684$  \\
\hline
GenTerrorNw2 & $102$ & $388$ & $0.0753$ & $7.608$ \\
\hline
GenTerrorNw3 & $105$ & $590$ & $0.108$ & $11.238$ \\
\hline
GenTerrorNw4 & $135$ & $556$  & $0.0615$ & $8.237$\\
\hline
\hline
URV E-Mail & $1,133$ & $5,541$ & $0.00864$ & $9.781$ \\
\hline
CA-NetSci & $1,463$ & $2,743$ & $0.00256$ & $3.750$ \\
\hline
\hline
\end{tabular}
\end{center}
\end{table}

\subsection{Increasing the Fragility of Networks}

In our experiments, we showed that our algorithm was able to significantly increase the network-wide degree centrality by removing nodes - hence increasing the $fragile$ function with respect to a given network.  In each of the five real-world terrorist networks that we examined, removal of only $12\%$ of nodes can increase the network-wide centrality between $17\%$ and $45\%$ (see Figures~\ref{fragile1}-\ref{fragile5}).  In Figure~\ref{fragile0} we show a visualization of how the \textbf{Tanzania} network becomes more ``star-like'' with subsequent removal of nodes by the greedy algorithm.

\begin{figure}[htbb]
    \begin{center}
        \includegraphics[width=.7\linewidth]{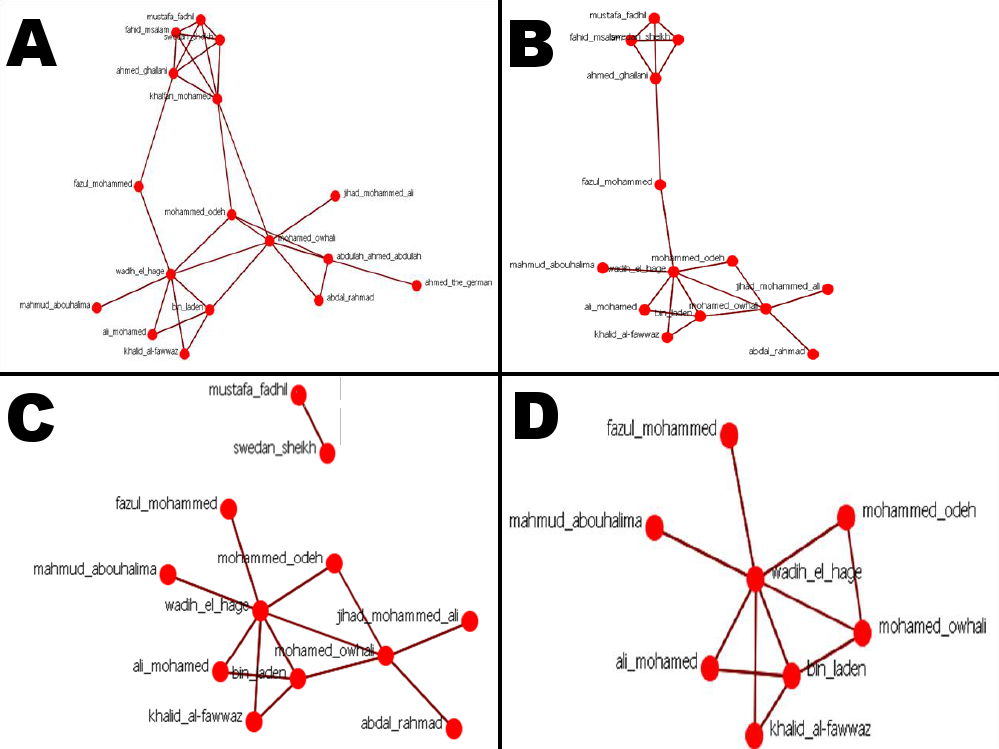}
    \end{center}
    \caption{Visualization of the \textbf{Tanzania} network after nodes removed by $\textsf{GREEDY}\_\textsf{FRAGILE}$.  Panel \textbf{A} shows the original network.  Panel \textbf{B} shows the network after $3$ nodes are removed, panel \textbf{C} shows the network after $5$ nodes are removed, and panel \textbf{D} shows the network after $9$ nodes are removed.  Notice that the network becomes more ``star-like'' after subsequent node removals.  In our experiment, after $\textsf{GREEDY}\_\textsf{FRAGILE}$ removed $11$ of the nodes in the network, it took the topology of a star.}
    \label{fragile0}
\end{figure}

For comparison, we also looked at the removal of high degree, closeness, and betweenness nodes.  Removal of high-degree, closeness, or betweenness nodes tended to increase the network-wide centrality.  In other words, traditional efforts of targeting leadership \textit{without} first conducting shaping operations may actually \textit{increase} the organization's ability to regenerate leadership - as such targeting operations effectively cause an organization to de-centralize.  We display these results graphically in Figures~\ref{fragile1}-\ref{fragile5}.  Notice that $\textsf{GREEDY}\_\textsf{FRAGILE}$ consistently causes an increase in the network-wide degree centrality.  An analysis of variance (ANOVA) reveals that there is a significant difference in the performance among our algorithm and the centrality measures with respect to increase or decrease in network-wide degree centrality ($p$-value less than $2.2 \cdot 10^{-16}$, calculated with \textbf{R} version 2.13).  Additionally, pairwise analysis conducted using Tukey's Honest Significant Difference (HSD) test indicates that the results of our algorithm differ significantly from any of the three centrality measures with a probability approaching $1.0$ ($95\%$ confidence, calculated with \textbf{R} version 2.13).  Typically, the ratio of percent increase in fragility to the percent of removed nodes is typically $2:1$ or greater.

\begin{figure}[htbb]
    \begin{center}
        \includegraphics[width=.8\linewidth]{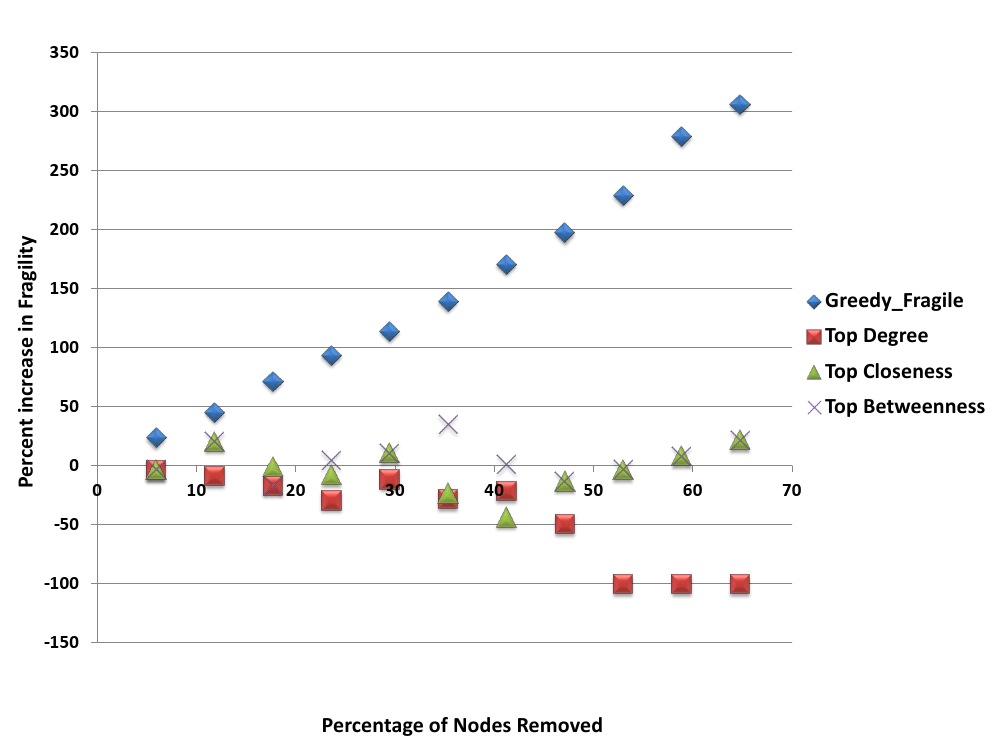}
    \end{center}
    \caption{Percent of nodes removed vs. percent increase in fragility for the \textbf{Tanzania} network using $\textsf{GREEDY}\_\textsf{FRAGILE}$, top degree, top closeness, and top betweenness.  The scale of the x-axis is positioned at $0\%$.}
    \label{fragile1}
\end{figure}

\begin{figure}[htbb]
    \begin{center}
        \includegraphics[width=0.8\linewidth]{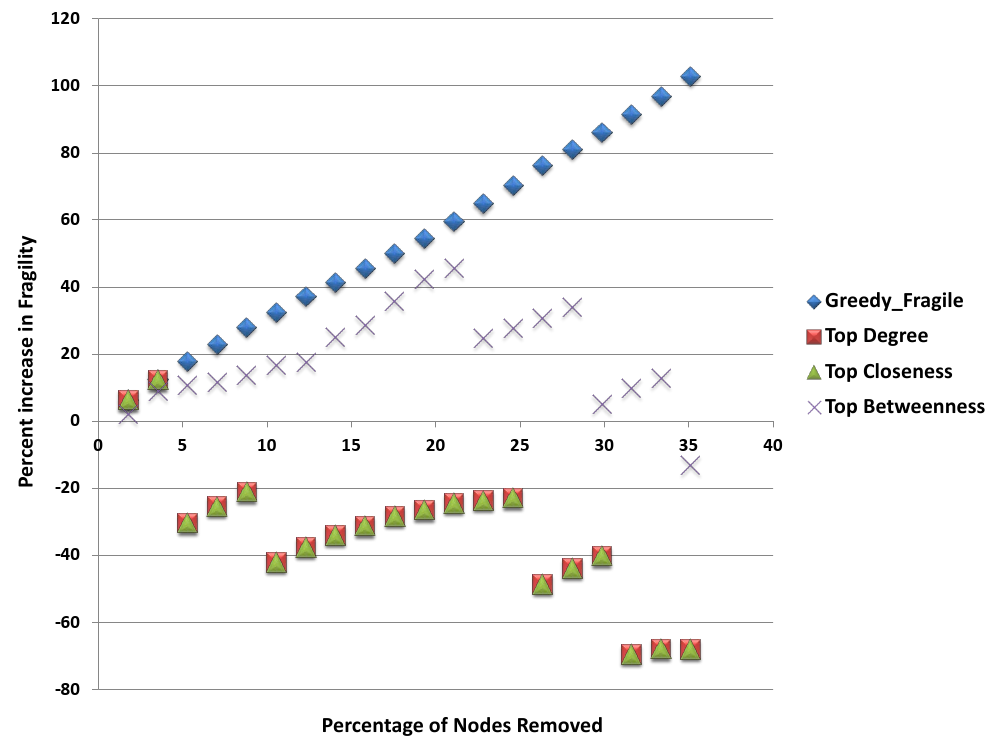}
    \end{center}
    \caption{Percent of nodes removed vs. percent increase in fragility for the \textbf{GenTerrorNet1} network using $\textsf{GREEDY}\_\textsf{FRAGILE}$, top degree, top closeness, and top betweenness.  The scale of the x-axis is positioned at $0\%$.}
    \label{fragile2}
\end{figure}

\begin{figure}[htbb]
    \begin{center}
        \includegraphics[width=.8\linewidth]{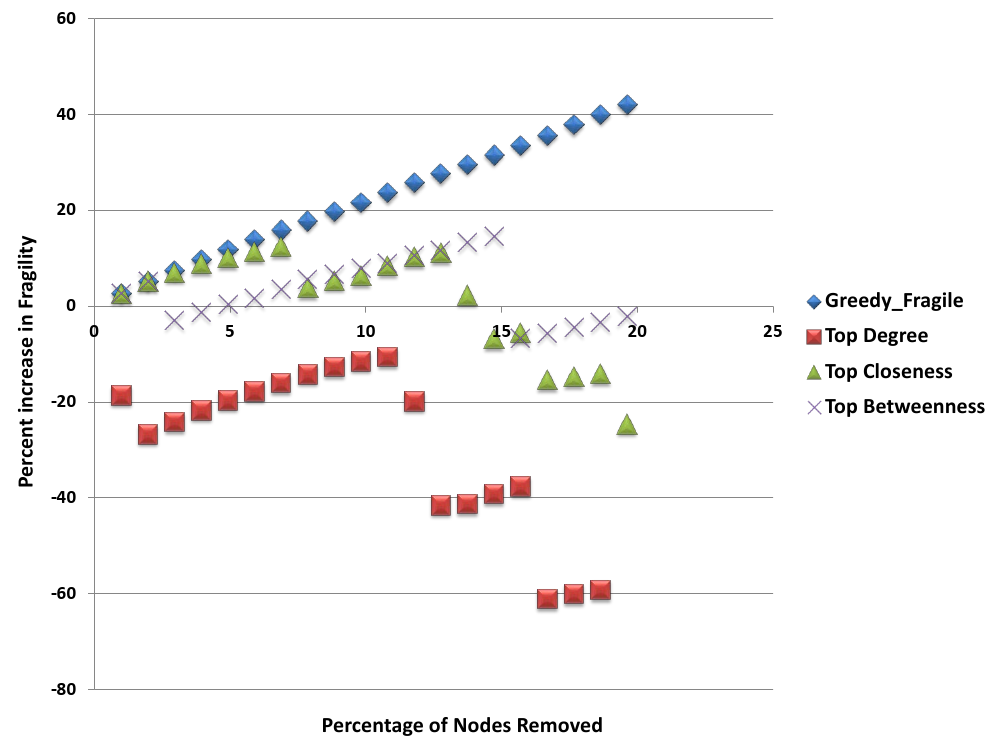}
    \end{center}
    \caption{Percent of nodes removed vs. percent increase in fragility for the \textbf{GenTerrorNw2} network using $\textsf{GREEDY}\_\textsf{FRAGILE}$, top degree, top closeness, and top betweenness.  The scale of the x-axis is positioned at $0\%$.}
    \label{fragile3}
\end{figure}

\begin{figure}[htbb]
    \begin{center}
        \includegraphics[width=.8\linewidth]{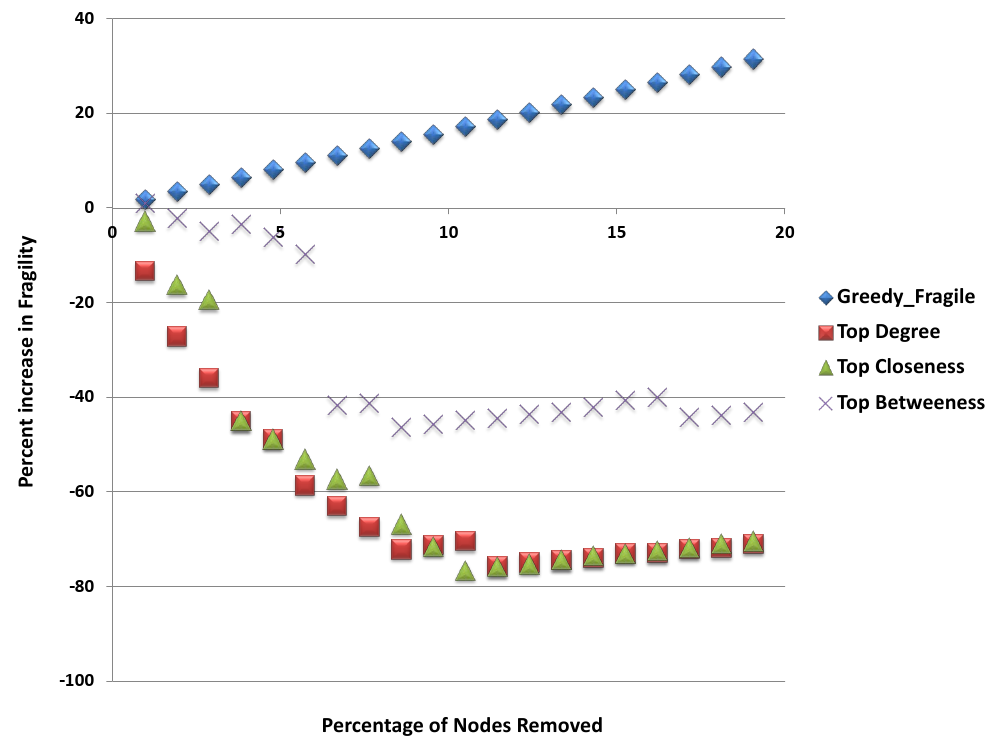}
    \end{center}
    \caption{Percent of nodes removed vs. percent increase in fragility for the \textbf{GenTerrorNw3} network using $\textsf{GREEDY}\_\textsf{FRAGILE}$, top degree, top closeness, and top betweenness.  The scale of the x-axis is positioned at $0\%$.}
    \label{fragile4}
\end{figure}

\begin{figure}[htbb]
    \begin{center}
        \includegraphics[width=.8\linewidth]{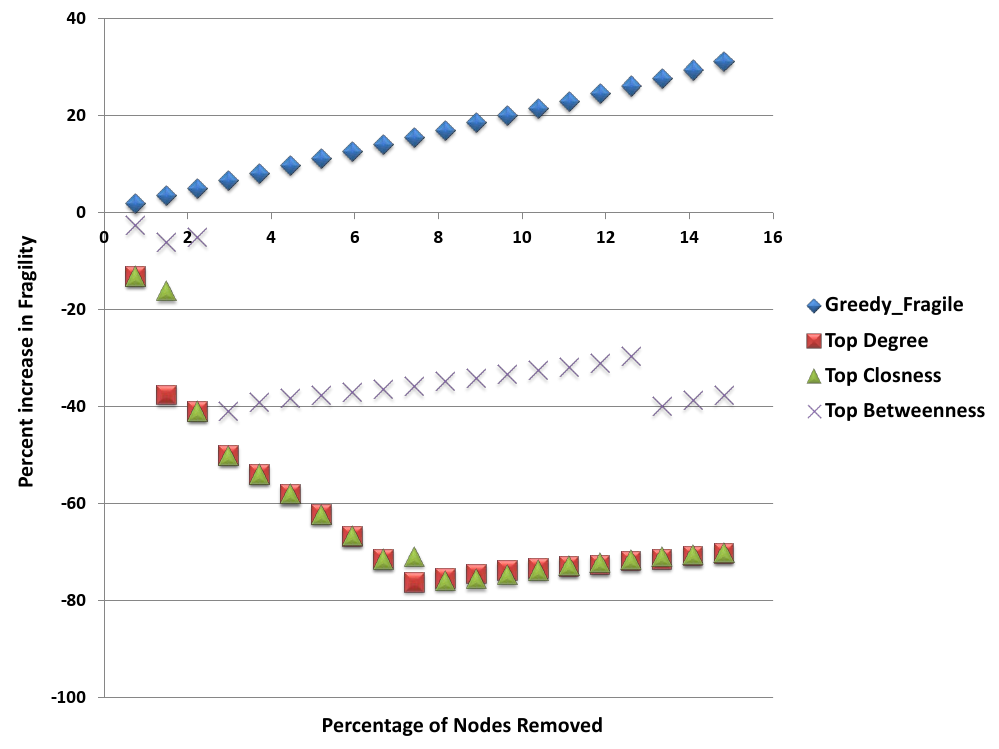}
    \end{center}
    \caption{Percent of nodes removed vs. percent increase in fragility for the \textbf{GenTerrorNw4} network using $\textsf{GREEDY}\_\textsf{FRAGILE}$, top degree, top closeness, and top betweenness.  The scale of the x-axis is positioned at $0\%$.}
    \label{fragile5}
\end{figure}

\subsection{Runtime}

We also evaluated the run-time of the $\textsf{GREEDY}\_\textsf{FRAGILE}$ algorithm.  With the largest terror network considered (GenTerrorNw4), we achieved short runtime (under $7$ seconds) on standard commodity hardware (see Figure~\ref{runtime}).  Hence, in terms of runtime, our algorithm is practical for use by a real-world analyst.  As predicted in our time complexity result, we found that the runtime of $\textsf{GREEDY}\_\textsf{FRAGILE}$ increases with the number of nodes removed.  We note that the implementations of top degree, closeness, and betweenness calculate those measures for the entire network at once - hence increasing the number of nodes to remove does not affect their runtime.

\begin{figure}[htbb]
    \begin{center}
        \includegraphics[width=.8\linewidth]{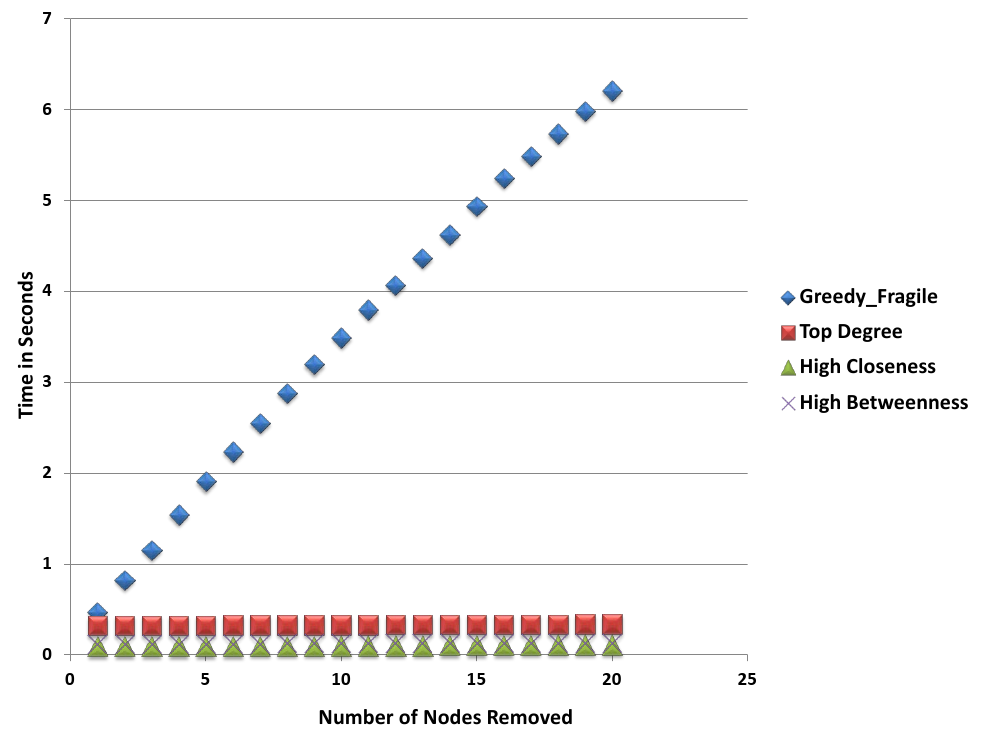}
    \end{center}
    \caption{Number of nodes removed vs. runtime for the \textbf{GenTerrorNw4} network using $\textsf{GREEDY}\_\textsf{FRAGILE}$, top degree, top closeness, and top betweenness.}
    \label{runtime}
\end{figure}

\subsection{Experiments on Large Data-Sets}
To study the scalability of $\textsf{GREEDY}\_\textsf{FRAGILE}$, we also employed it on two large social networks.  Note that these datasets are not terrorist or insurgent networks.  However, the larger size of these datasets is meant to illustrate how well our approach scales.  For these experiments, we used an e-mail network from University Rovira i Virgili (URV E-Mail)~\cite{uvi} and a Network Science collaboration network (CA-NetSci) from \cite{umich} (see Table~\ref{datasetTable}).  In Figure~\ref{scale1} we show the percentage of nodes removed vs. the percent increase in fragility.  We note that $2:1$ ratio of percent increase in fragility to the percent of removed nodes appears to be maintained even in these large datasets.  In Figure~\ref{scale2} we show the runtime for $\textsf{GREEDY}\_\textsf{FRAGILE}$ on the two large networks.  We note that the behavior of runtime vs. number of nodes removed resembles that of the GenTerrorNw4 network from the previous section.  Also of interest is that the algorithm was able to handle networks of over a thousand nodes in about $20$ minutes on commodity hardware.

\begin{figure}[htbb]
    \begin{center}
        \includegraphics[width=.8\linewidth]{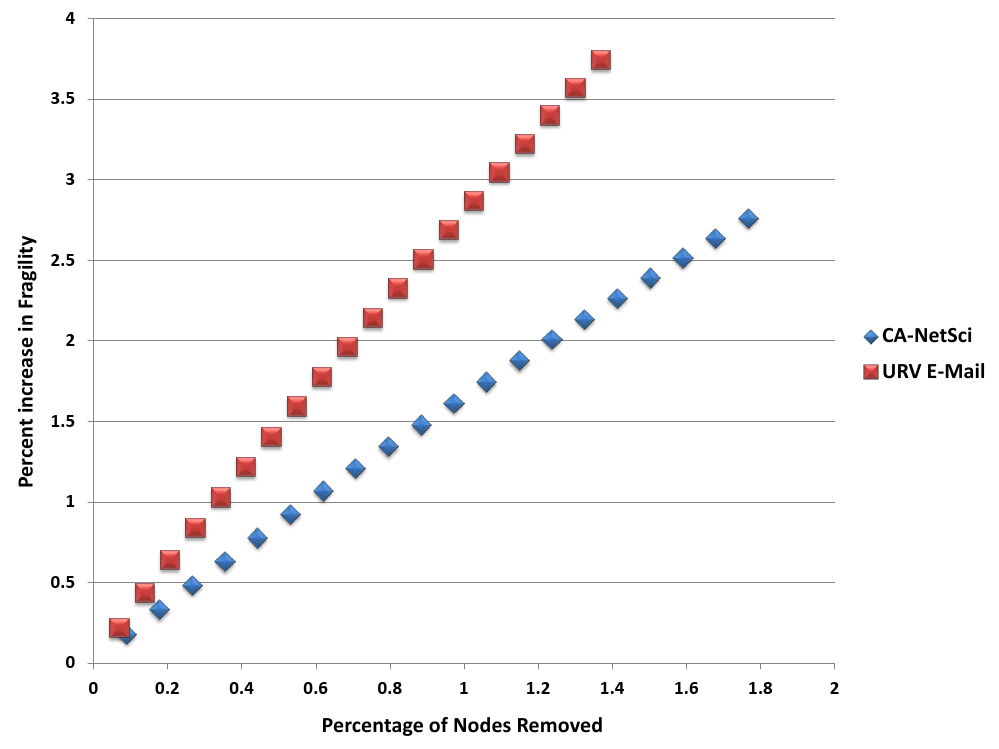}
    \end{center}
    \caption{Percent of nodes removed vs. percent increase in fragility for the \textbf{URV E-Mail} and \textbf{CA-NetSci} networks using $\textsf{GREEDY}\_\textsf{FRAGILE}$.}
    \label{scale1}
\end{figure}

\begin{figure}[htbb]
    \begin{center}
        \includegraphics[width=.8\linewidth]{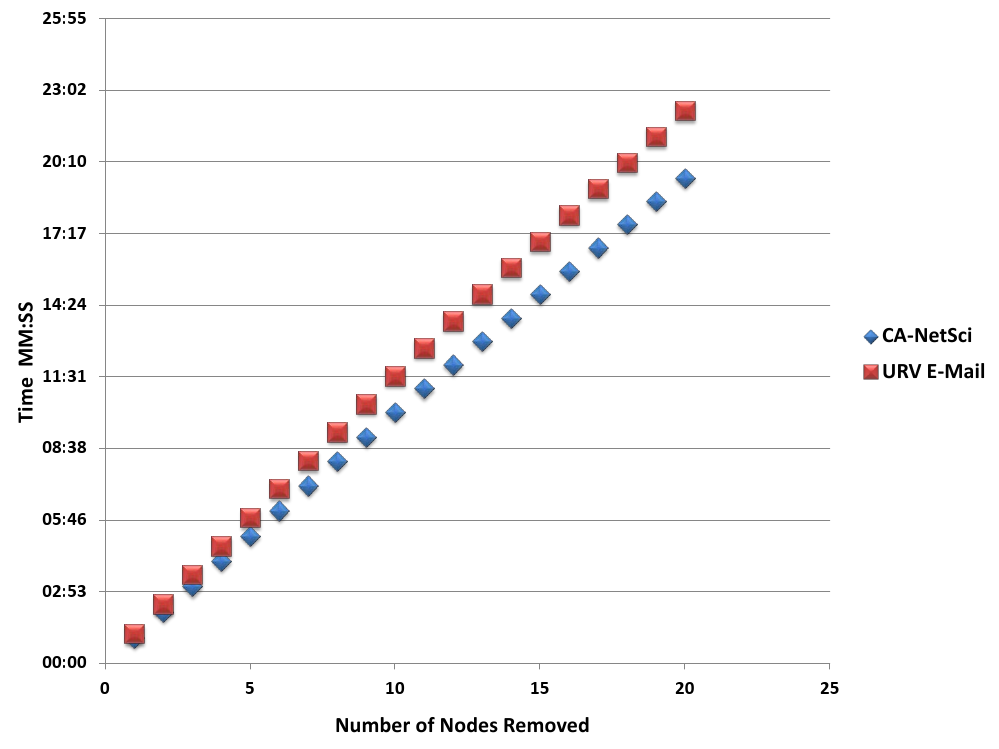}
    \end{center}
    \caption{Number of nodes removed vs. runtime for the \textbf{URV E-Mail} and \textbf{CA-NetSci} networks using $\textsf{GREEDY}\_\textsf{FRAGILE}$.}
    \label{scale2}
\end{figure}

\section{Related Work}
\label{rwSec}
Various aspects of the resiliency of terrorist networks have been previously explored in the literature.  For instance, \cite{Lindelauf2009126} studies the ability such network to facilitate communication while maintaining secrecy while \cite{gut10} studies how such networks are resilient to cascades.  However, to our knowledge, the network-wide degree centrality in such networks - and how to increase this property - has not been previously studied.

There has been much work dealing with the removal of nodes from a network to maximize fragmentation~\cite{albert,borgatti,arulselvan} where the nodes removed are mean to either increase fragmentation of the network or reduce the size of the largest connected component.  While this work has many applications, it is important to note that there are special considerations of terrorist and insurgent networks that we must account for in a targeting strategy.  For instance, if conducting a counter-intelligence operation while targeting, as in the case of \cite{slowBurn}, it may be desirable to preserve some amount of connectivity in the network.  Additionally, fragmentation of a network may result in the splintering of an organization into smaller, but more radical and deadly organizations.  This happens because in some cases, it may be desirable to keep certain terrorist or insurgent leaders in place to restrain certain, more radical elements of their organization.  Such splinter was observed for the insurgent organization Jaysh al-Mahdi in Iraq~\cite{uw-sg3}.  Further, these techniques do not specifically address the issue of emerging leaders.  Hence, if they were to be used for counter-terrorism or counter-insurgency, they would likely still benefit from a shaping operation to reduce organization's ability to regenerate leadership.

There has been some previous work on identifying emerging leaders in terrorist networks.  Although such an approach could be useful in identifying certain leaders, it does not account the organizations ability as a whole to regenerate leadership.  In \cite{carley04}, the topic of \textit{cognitive demand} is studied.  The cognitive load of an individual deals with their ability to handle multiple demands on their time and work on complex tasks.  Typically, this can be obtained by studying networks where the nodes may represent more than individual people - but tasks, events, and responsibilities.  However, it may often be the case that this type of information is often limited or non-existent in many situations.  Additionally, as discussed throughout this paper, the targeting of individual nodes may often not be possible for various reasons.  Hence, our framework, that focuses on the \textit{network's ability to regenerate leadership} as opposed to finding individual emerging leaders may be more useful as we can restrict the available nodes in our search using the ``no strike list.''    By removing these nodes from targeting consideration - but by still considering their structural role - our framework allows a security force to reduce the regenerative ability of a terror network by ``working around'' individuals that may not be targeted.

In more recent work \cite{petersen11} looks at the problem of removing leadership nodes from a terrorist or criminal network in a manner that accounts for new links created in the aftermath of an operation.  Additionally, \cite{ovelgonne12} look at identifying leaders in covert terrorist network who attempt to minimize their communication due to the clandestine nature of their operations.  They do this by introducing a new centrality measure called ``covertness centrality.''  Both of these approaches are complementary to ours as they focus on the leadership of the terrorist or insurgent group - as this approach focuses on the networks ability to re-generate leadership.  A more complete integration of this approach leadership targeting method such as these (i.e. using a network-wide version of covertness centrality) is an obvious direction for future work.

\section{Conclusions}

In this paper we described how to target nodes in a terrorist or insurgent network as part of a \textit{shaping} operation designed to reduce the organization's ability to regenerate leadership.  Our key intuition was to increase the network-wide degree centrality which would likely have the effect of eliminating emerging leaders as maximizing this quantity would intuitively increase the organization's reliance on a single leader.  In this paper, we found that though identifying a set of nodes to maximize this network-wide degree centrality is NP-hard, our greedy approach proved to be a viable heuristic for this problem, increasing this quantity between $17\%-45\%$ in our experiments.  Future work could include an examination of other types of network-wide centrality -- for instance network-wide closeness centrality -- instead of network-wide degree centrality.  Another aspect that we are considering in ongoing research is determining the effectiveness of the shaping strategy when we have observed only part of the terrorist or insurgent organization -- as is often the case as such networks are created from intelligence data.

\section*{Acknowledgements}
We would like to thank Jon Bentley and Charles Weko for their feedback on an earlier version of this paper.\\
\indent Some of the authors are supported under by the Army Research Office (project 2GDATXR042).  The opinions in this paper are those of the authors and do not necessarily reflect the opinions of the funders, the U.S. Military Academy, or the U.S. Army.

\end{document}